\newtheorem{theorem}{Theorem}
\newtheorem{corollary}[theorem]{Corollary}
\newtheorem{definition}[theorem]{Definition}
\newtheorem{example}[theorem]{Example}
\newtheorem{lemma}[theorem]{Lemma}
\newtheorem{proposition}[theorem]{Proposition}
\newtheorem{remark}[theorem]{Remark}
\newtheorem{remarks}[theorem]{Remarks}
\newenvironment{proof}[1][Proof]{\textbf{#1.} }{\ \rule{0.5em}{0.5em}}
\newcommand{\N}{{\mathbb N}}
\newcommand{\Z}{{\mathbb Z}}
\newcommand{\F}{\mathbb F_q}
\newcommand{\B}{{\mathbb B}}
\renewcommand{\bf}[1]{\mathbf{#1}}
\newcommand{\Le}{\mathbb L}
\newcommand{\Q}{{\mathcal Q}}
\newcommand{\D}{{\mathcal D}}
\newcommand{\mad}{\mathrm{-mad}}
\newcommand{\tq}{\; \mid \;}
\newcommand{\supp}{{\rm supp}}   
\begin{document}
\title{Ds-bounds for cyclic codes: New bounds for abelian codes.}

 \author{ J.J. Bernal, M. Guerreiro\footnote{M. Guerreiro is with Departamento de Matem\'atica, Universidade Federal de Vi\c cosa, 36570-900 Vi\c cosa-MG, Brazil. Supported by CNPq-Brazil.
E-mail: marines@ufv.br}  
and J. J. Sim\'on
\footnote{J. J. Bernal and J. J. Sim\'on are with the Departamento de Matem\'aticas, Universidad de Murcia, 30100 Murcia, Spain. Partially supported by MINECO project MTM2012-35240 and Fundaci\'{o}n S\'{e}neca of Murcia.
E-mail: \{josejoaquin.bernal, jsimon\}@um.es}% <-this % stops a space
}

% The paper headers
%\markboth{Journal of ??????,~Vol.~11, No.~4, December~2012}%
%{Shell \MakeLowercase{\textit{et al.}}: Bare Advanced Demo of IEEEtran.cls for Journals}
% The only time the second header will appear is for the odd numbered pages
% after the title page when using the twoside option.
% 

%\IEEEtitleabstractindextext{%

% Note that keywords are not normally used for peerreview papers.

% make the title area
\maketitle

%\IEEEdisplaynontitleabstractindextext
% \IEEEdisplaynontitleabstractindextext  has no effect when using
% compsoc under a non-conference mode.

% For peer review papers, you can put extra information on the cover
% page as needed:
% \ifCLASSOPTIONpeerreview
% \begin{center} \bfseries EDICS Category: 3-BBND \end{center}
% \fi
%
% For peerreview papers, this IEEEtran command inserts a page break and
% creates the second title. It will be ignored for other modes.
%\IEEEpeerreviewmaketitle
\begin{abstract}
In this paper we develop a technique to extend any bound for cyclic codes constructed from its defining sets (ds-bounds) to abelian (or multivariate) codes. We use this technique to improve the searching of new bounds for abelian codes. 
\end{abstract}
\textbf{Keywords:}
Abelian code, bounds for minimum distance, cyclic codes, algorithm.

\section{Introduction}

\noindent 

The study of abelian codes is an important topic in Coding Theory, having an extensive literature, because they have special algebraic properties that allow one to construct good codes with efficient encoding and decoding algorithms. More precisely, regarding decoding, the two most known general techniques are permutation decoding \cite{BSpermdec} and the so called locator decoding~\cite{Blah} that uses the Berlekamp-Massey algorithm~\cite{Sakata} (see also~\cite{KZ}).

Even though the mentioned decoding methods require to know the minimum distance, or a bound for it, there are not much literature or studies on its computation and properties, or it does exist only for specific families of codes (see~\cite{Blah}). Concerning BCH bound, in \cite{Camion}, Camion introduces an extension from cyclic to abelian codes which is computed through the apparent distance of such codes.  Since then there have been some papers improving the original computation and giving a notion of multivariate BCH bound and codes (see~\cite{BBCS2,  Evans}).

These advances lead us to some natural questions about the extension to the multivariate case of all generalizations and improvements of the BCH bound known for cyclic codes; specifically, those bounds on the minimum distance for cyclic codes from defining sets. 

There are dozens of papers on this topic regarding approaches from matrix methods (\cite{BettiSala, ZK1}) through split codes techniques (\cite{HTY,Jensen}) until arriving at the most classic generalizations based on computations over the 
defining set, as the HT bound~\cite{HT}, the Ross bound~\cite{Roos} and the improvements by Van Lint and Wilson \cite{vLW}.

Having so many references on the subject, it seems very necessary to find a general method that allows us to extend any bound for the minimum distance of cyclic codes based on the defining set to the multivariate case. This is our goal. We shall show a method to extend to the multivariate case any bound of the type mentioned via associating an apparent distance to such bound.

To do this, we give in Section 3 a notion of general defining set bound (ds-bound) for the minimum distance of cyclic codes; then, in Section 4 we relate the weight of codewords with the apparent distance of their discrete Fourier transforms. In Section 5, we use this technique to define the apparent distance of an abelian code with respect to a set of ds-bounds. In Section 6, we show an algorithm (of linear complexity by Remark \ref{complejidad}) to compute this apparent distance. Finally, we show how one may improve the searching of new bounds for abelian codes.

\section{Preliminaries}

 Let $\F$  be a finite field with $q$ elements, with $q$ a power of a prime $p$, $r_i$ be positive integers, for all $i\in \{ 1,\ldots, s\}$, and $n=r_1\cdots r_s$.  We denote by $\Z_{r_i}$ the ring of integers modulo $r_i$ and we shall always write its elements as canonical representatives.
 
An \textbf{abelian code} of length $n$ is an ideal in the algebra   $\F(r_1,\ldots, r_s)=\F[X_1,\ldots, X_s]/(X_1^{r_1}-1,\ldots, X_s^{r_s}-1)$ and throughout the work  we assume that this algebra is semisimple; that is, $\gcd (r_i,q)=1$, for all $i\in \{ 1,\dots,s\}$.  Abelian codes are also called multidimensional cyclic codes (see, for example, \cite{Imai}).

The codewords are identified with polynomials $f(X_1,\dots,X_s)$ in which, for each monomial, the degree of the indeterminate $X_k$ belongs to $\Z_{r_k}$. 
We denote by $I$ the set $\Z_{r_1}\times\cdots\times \Z_{r_s}$ and we  write the elements $f \in  \F(r_1,\dots,r_s)$ as 
$f=f(X_1,\dots,X_s)=\sum a_\bf{i} \bf{X}^\bf{i}$, where $\bf{i}=(i_1,\dots, i_s)\in I$ and $\bf{X}^\bf{i}=X_1^{i_1}\cdots X_s^{i_s}$. Given a polynomial $f \in \F[X_1,\dots,X_s]$ we denote by $\overline{f}$ its image under the canonical projection onto $\F(r_1,\dots,r_s)$.

For each $i\in \{ 1,\ldots, s\}$, we denote by $R_{r_i}$ (resp., $U_{r_i}$) the set of all $r_i$-th roots of unity (resp. all $r_i$-th primitive roots of unity) and define $R=\prod_{i=1}^s R_{r_i}$ and $U=\prod_{i=1}^s U_{r_i}$.

For $f=f(X_1,\dots,X_s) \in \F[X_1,\dots,X_s]$ and $\bar{\alpha}\in R$, we write $f(\bar{\alpha})=f(\alpha_1,\dots,\alpha_s)$. For $\bf{i}=(i_1,\ldots, i_s)\in I$, we write
$\bar{\alpha}^{\bf{i}} = (\alpha_1^{i_1}, \dots ,\alpha_s^{i_s})$.

It is a known fact that every abelian code $C$ in $\F(r_1,\dots,r_s)$ is totally determined by its \textbf{root set} or \textbf{set of zeros} 
$$
Z(C)=\left\{\bar{\alpha}\in  R \tq f(\bar{\alpha})=0,\;\; \mbox{ for all }\; f\in C \right\}.
$$ 
The set of non zeros is denoted by $\overline{Z(C)}=\Z_n\setminus Z(C)$.  For a fixed $\bar{\alpha}\in U$, the code $C$ is  determined by its \textbf{defining set}, with respect to $\alpha$, which is defined as 

$$
\D_{\bar{\alpha}}(C) = \left\{ \bf{i}=(i_1,\dots,i_s)\in I \tq f(\bar{\alpha}^{\bf{i}})=0, 
\text{ for all } f\in C\right\}.
$$

%\begin{eqnarray*}
%\D_{\bar{\alpha}}\left(C\right) &=& \left\{\bar{a}=(a_1,\dots,a_s)\in I \tq f(\bar{\alpha}^{\bar{a}})=0,\right. \\ 
%&&\left.\text{ for all } f\in C\right\}.
%\end{eqnarray*}
Given an element $a=(a_1,\dots,a_s)\in I$, we shall define its \textbf{$q$-orbit} modulo  $\left(r_1,\ldots,r_s \right)$ as $ Q(a)=\left\{\left(a_1\cdot q^{i} ,\dots, a_s\cdot q^{i}  \right) \in I \tq i\in \N\right\}.$ In our case, it is know that the defining set
 $\D_{\bar{\alpha}}\left(C\right)$ is a disjoint union of $q$-orbits modulo $(r_1,\dots,r_s)$. Conversely, every union of $q$-orbits modulo $(r_1,\dots,r_s)$ determines an abelian code (an ideal) in $ \F(r_1,\dots,r_s)$  (see, for example, \cite{BBCS2} for details). We recall that the notions of root set and defining set also apply to polynomials. Moreover, if $C$ is the ideal generated by the polynomial $f$ in $\F(r_1,\dots,r_n)$, then $\D_{\bar{\alpha}}\left(C\right)=\D_{\bar{\alpha}}\left(f\right)$.

We recall that the notion of defining set also applies to cyclic codes. For $s=1$ and  $r_1=n$, a $q$-orbit is called a \textbf{ $q$-cyclotomic coset} of a positive integer $b$ modulo $n$ and  it is the set $C_{q}(b) = \{b\cdot q^{i}\in \Z_{n} \tq i\in \N\}$.

Let $\Le|\F$ be an extension field containing $U_{r_i}$, for all  $i\in \{ 1 \dots,s\}$. The \textbf{discrete Fourier transform of a polynomial $f\in \F(r_1,\dots,r_s)$ with respect to $\bar{\alpha} \in U$} (also called Mattson-Solomon polynomial in \cite{Evans}) is the polynomial  
$\varphi_{\bar{\alpha},f}(\bf{X})=\sum_{\bf{j}\in I} f(\bar{\alpha}^{\bf{j}})\bf{X}^{\bf{j}}\in \Le(r_1,\dots,r_s).$
It is known that the discrete Fourier transform may be viewed as an isomorphism of algebras 
$\varphi_{\bar{\alpha}}:\Le(r_1,\dots,r_s)\longrightarrow (\Le^{|I|},\star),$
 where the multiplication ``$\star$'' in $\Le^{|I|}$ is defined coordinatewise. Thus, we may see $\varphi_{\bar{\alpha},f}$ as a vector in $\Le^{|I|}$ or as a polynomial in $\Le(r_1,\dots,r_s)$ (see \cite[Section 2.2]{Camion}).

\section{Defining set bounds for cyclic codes}

In this section we deal with cyclic codes; that is $r_1=n$.  By $\mathcal{P}(\Z_n )$ we denote the set of the parts of  $\Z_n $. We take an arbitrary $\alpha\in U_n$.\\

\begin{definition} \label{boundg}
A \textbf{defining set bound} (or \textbf{ds-bound}, for short) for the minimum distance  of cyclic codes  is a family of relations $\delta = \{\delta_n\}_{n\in \N} $ such that, for each $n\in \N $, $\delta_n \subseteq \mathcal{P}(\Z_n ) \times \N$ satisfies the following conditions:
\begin{enumerate}
\item If $C$ is a cyclic code in $\mathbb{F}(n)$ such that $\emptyset \neq N \subseteq \D_{\alpha}(C)$, then $1\leq a\leq d(C)$, for all $(N, a) \in \delta_n$.
\item If $\emptyset \neq N \subseteq M$ are subsets of  $\Z_n $ then
$(N, a) \in \delta_n$ implies $(M, a) \in \delta_n$.
\item For all $N \in  \mathcal{P}(\Z_n )$, $(N,1)\in \delta_n$. %eso significa que cualquier código no cero tiene distancia minima por lo menos 1. 
\end{enumerate}
\end{definition}

From now on, sometimes we write simply $\delta$ to denote a ds-bound or any of its elements independently on the length of the code. It will be clear in the context which one is being used.\\

\begin{remarks}
\textbf{(1)} For example, the BCH bound states that for any cyclic code in $\mathbb{F}_q (n)$ that has a string of $t -1$ consecutive powers of some $\alpha  \in U_n$, the minimum distance of the code is at least $t$ \cite[Theorem 7.8]{MWS}. 

Now, define $\delta \subset \mathcal{P}(\Z_n ) \times \N$ as follows: 
for any $a\geq 2$, $(N, a) \in \delta$ if and only if there exist 
$i_0, i_1,\ldots,i_{a-2}$  in $N$ which are consecutive integers modulo $n$. Then the  BCH bound says that $\delta$ is a ds-bound, for any cyclic code (we only have to state Condition 3 as a convention).
 
\textbf{(2)} It is easy to check that all extensions of the BCH bound, all new bounds  from the defining set of a cyclic code as in \cite{BettiSala,HT,Roos1,Roos,ZK1} and the new bounds and improvements arising from Corollary 1, Theorem 5 and results in Section 4 and Section 5 in \cite{vLW}, also verify Definition~\ref{boundg}.\\
\end{remarks}

In order to relate the idea of ds-bound with the Camion's apparent distance, which will be defined later, we consider the following family of maps.\\

\begin{definition}
Let $\delta$ be a ds-bound for the minimum distance of cyclic codes. The  \textbf{optimal ds-bound associated to $\delta$} is the family $\overline{\delta}= \{\overline{\delta}_n\}_{n\in \N} $ of maps $\overline{\delta}_n :  \mathcal{P}(\Z_n ) \longrightarrow  \N$ defined as 
$\overline{\delta}_n(N) = \max \{ b\in \N \,|\, (N,b) \in \delta_n \}$.\\
\end{definition}

The following result is immediate.\\

\begin{lemma} \label{lemma1} 
Let $\delta$ be a ds-bound for the minimum distance of cyclic codes.   Then, for each $n\in \Z$:
\begin{enumerate}
\item If $C$ is a cyclic code in   $\mathbb{F}(n)$ such that $\emptyset \neq N \subseteq \D_{\alpha}(C)$, then $1\leq \overline{\delta}_n (N)\leq d(C)$.

\item If $\emptyset \neq N \subseteq M\subseteq \Z_n $, then
$\overline{\delta}_n (N) \leq \overline{\delta}_n (M)$ . $\blacksquare$\\
\end{enumerate}
\end{lemma}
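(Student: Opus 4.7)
The plan is to observe that both assertions are direct consequences of the three conditions in Definition~\ref{boundg} applied to the definition of $\overline{\delta}_n(N)$ as a maximum. No nontrivial machinery is required; the only step that deserves a brief remark is that the maximum defining $\overline{\delta}_n(N)$ is in fact well defined whenever $N\neq\emptyset$, i.e.\ that the set $\{b\in\N:(N,b)\in\delta_n\}$ is finite.

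For part 1, I would fix a cyclic code $C$ in $\mathbb{F}(n)$ with $\emptyset\neq N\subseteq \D_\alpha(C)$. Condition 3 of Definition~\ref{boundg} guarantees that $(N,1)\in\delta_n$, so the set $\{b\in\N:(N,b)\in\delta_n\}$ is nonempty and contains $1$. Condition 1 applied to this specific code $C$ says that every such $b$ satisfies $b\leq d(C)$, so the set is bounded above, the maximum therefore exists and lies in the interval $[1,d(C)]$. Since by definition this maximum equals $\overline{\delta}_n(N)$, the desired inequality $1\leq \overline{\delta}_n(N)\leq d(C)$ follows immediately.

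For part 2, I would take $\emptyset\neq N\subseteq M\subseteq\Z_n$ and note that Condition 2 of Definition~\ref{boundg} gives directly the implication $(N,b)\in\delta_n\Rightarrow (M,b)\in\delta_n$, hence the inclusion $\{b\in\N:(N,b)\in\delta_n\}\subseteq\{b\in\N:(M,b)\in\delta_n\}$. Taking the maximum on both sides of this inclusion of nonempty bounded subsets of $\N$ yields $\overline{\delta}_n(N)\leq\overline{\delta}_n(M)$.

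The only mild obstacle is the bookkeeping point mentioned above: ensuring that the maxima are taken over bounded sets so that $\overline{\delta}_n$ is a bona fide function into $\N$. For the $N$ appearing in part 1 this is handed to us by Condition 1 via the code $C$; for a general nonempty $N\subseteq\Z_n$ one can exhibit a witnessing nonzero cyclic code by closing $N$ under multiplication by $q$ and using the resulting union of $q$-cyclotomic cosets as a defining set (excluding the degenerate case $N=\Z_n$ which can be handled by convention). I would mention this briefly and then conclude, since the argument is otherwise as immediate as the authors indicate.
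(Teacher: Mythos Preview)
Your proposal is correct and matches the paper's approach: the authors simply declare the lemma ``immediate'' with no further argument, and what you have written is exactly the natural unpacking of that claim from the three conditions of Definition~\ref{boundg}. Your extra care about well-definedness of the maximum is a nice touch that the paper leaves implicit.
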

As we noted above, we may omit the index of the map $\bar{\delta}_n$, because it will be clear on the context for which value it is being taken.

\section{Apparent distance of matrices}

We begin this section recalling the notion and notation of a hypermatrix that will be used hereby, as it is described in \cite{BBCS2}. 
For any $\bf{i}\in I$, we write its $k$-th coordinate as $\bf{i}(k)$. A \textbf{hypermatrix with entries in a set $R$ indexed by $I$ (or an $I$-hypermatrix over $R$)} is an $s$-dimensional $I$-array, denoted by $M=\left(a_{\bf{i}}\right)_{\bf{i}\in I}$, with $a_{\bf{i}}\in R$ \cite{Yamada}. The set of indices, the dimension and the ground field will be omitted if they are clear in the context. For $s=2$, $M$ is a matrix and when $s=1$, $M$ is a vector. We write $M=0$ when all its entries are $0$ and   $M\neq 0$, otherwise.  As usual, a \textbf{hypercolumn} is defined as $H_M(j,k)=\left\{ a_{\bf{i}}\in M\tq \bf{i}(j)=k\right\}$, with $1\leq j\leq s$ and $0\leq k<r_j$, where $a_{\bf{i}}\in M$ means that $a_{\bf{i}}$ is an entry of $M$. A hypercolumn will be seen as an $(s-1)$-dimensional hypermatrix. In the case $s=2$, we refer to hypercolumns as rows or columns and, when $s=1$, we say entries.

Let $D\subseteq I$. The \textbf{hypermatrix afforded by $D$} is defined as $M=\left(a_{\bf{i}}\right)_{\bf{i}\in I}$, where $a_{\bf{i}}=1$ if $\bf{i}\not\in D$ and $a_{\bf{i}}=0$, otherwise. When $D$ is an union of $q$-orbits we say that $M$ is a \textbf{$q$-orbits hypermatrix}, and it will be denoted by $M=M(D)$. For any $I$-hypermatrix $M$ with entries in a ring, we define the support of $M$ as the set $\supp(M)=\left\{ \bf{i}\in I \tq a_{\bf{i}}\neq 0\right\}$. Its complement with respect to $I$ will be denoted by $\D(M)$. 
To define and compute the apparent distance of an abelian code we will use the hypermatrix afforded by its defining set, with respect to $\bar{\alpha} \in U$.

 We define a partial ordering over the set $\left\{M(D)\tq D\text{ is union of $q$-orbits of $I$}\right\}$ of hypermatrices   as follows:
\begin{equation}\label{matrixordering}
M(D)\leq M(D') \Leftrightarrow \supp\left(M(D)\right)\subseteq \supp\left(M(D')\right).
\end{equation}
Clearly, this condition is equivalent to $D'\subseteq D$. We begin with the apparent distance of a vector in $\Le^{n}$.

\ 

\begin{definition} %\label{appdistbound}
Let $\delta$ be a ds-bound for the minimum distance of cyclic codes and $v\in \mathbb{L}^n$ a vector. The \textbf{apparent distance of $v$ with respect to $\delta$} (or \textbf{$\delta$-apparent distance of $v$}, for short), denoted by $\delta^*(v)$, is defined as 
\begin{enumerate}
\item If $v=0$, then $\delta^*(v)=0$.
\item If $v\neq 0$, then $\delta^*(v)= \overline{\delta}(\mathbb{Z}_n\setminus \supp(v))$. 
\end{enumerate}
\end{definition}
From now on we denote by $\B$ a set of ds-bounds that are used to proceed a  computation of the apparent distances of matrices, hypermatrices or abelian codes. 

\

\begin{definition} \label{appdistbound}
Let $v\in \mathbb{L}^n$. The \textbf{apparent distance of $v$ with respect to $\B$}  denoted by $\Delta_\B(v)$, is: 
\begin{enumerate}
\item If $v=0$, then $\Delta_\B(v)=0$.
\item If $v\neq 0$, then $\Delta_\B(v)= \max\{\delta^*(v)\mid \delta\in\B\}$. 
\end{enumerate}
\end{definition}

\

\begin{remarks}\label{propiedades dist apar en 1 var}
 The following properties arise straightforward from the definition above, for any $v\in \mathbb{L}^n$.
 \begin{enumerate}
  \item If $v\neq 0$ then $\Delta_{\B}(v)\geq 1$.
  \item If $\supp (v)\subseteq \supp(w)$ then $\Delta_{\B}(v)\geq \Delta_{\B}(w)$. \\
 \end{enumerate}
\end{remarks}

\begin{proposition}\label{prop1}
Let $f\in \Le(n)$ and $v$ be the vector of its coefficients. Fix any $\alpha\in U_n$. Then $ \Delta_{\B}(v) \leq \omega(\varphi_{\alpha,f}^{-1})$.
\end{proposition}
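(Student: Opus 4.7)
The plan is to identify $\varphi_{\alpha,f}^{-1}$ with a concrete polynomial $h\in \Le(n)$ whose discrete Fourier transform recovers the coefficient vector $v$ of $f$, then exhibit $h$ as a codeword of a cyclic code whose defining set with respect to $\alpha$ is exactly $\Z_n\setminus \supp(v)$, and finally invoke the ds-bound through Lemma \ref{lemma1}.

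First I would unwind the notation: writing $f=\sum_{j\in \Z_n} a_j X^j$ and $h=\varphi_{\alpha,f}^{-1}$, the defining property of the DFT isomorphism $\varphi_\alpha:\Le(n)\to (\Le^n,\star)$ gives $h(\alpha^j)=a_j$ for every $j\in \Z_n$; hence $\Z_n\setminus \supp(v)=\{j\in \Z_n\mid h(\alpha^j)=0\}$. The case $v=0$ is immediate, since then $h=0$ and $\Delta_\B(v)=0=\omega(h)$. Assuming $v\neq 0$, I would consider the cyclic code $C=(h)\subseteq \Le(n)$. A short verification (using that $h$ generates $C$ and that every $g=hp\in C$ vanishes at $\alpha^j$ whenever $h$ does) yields $\D_\alpha(C)=\Z_n\setminus \supp(v)$, while $h\in C$ forces $d(C)\leq \omega(h)$. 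Provided $\Z_n\setminus \supp(v)\neq \emptyset$, applying Lemma \ref{lemma1}(1) to $N=\D_\alpha(C)$ and to an arbitrary $\delta\in \B$ gives $\delta^*(v)=\overline{\delta}(\Z_n\setminus \supp(v))\leq d(C)\leq \omega(h)$, and taking the maximum over $\delta\in \B$ yields $\Delta_\B(v)\leq \omega(h)$ by Definition \ref{appdistbound}.

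The two delicate points will be the degenerate case $\supp(v)=\Z_n$ and the fact that Definition \ref{boundg} formally speaks only of cyclic codes in $\F(n)$, whereas $h$ lives in $\Le(n)$. When $\D_\alpha(C)=\emptyset$, only condition (3) of Definition \ref{boundg} constrains the pairs $(\emptyset,b)\in \delta$, so the natural convention $\overline{\delta}(\emptyset)=1$ together with $\omega(h)\geq 1$ (since $h\neq 0$) still gives $\Delta_\B(v)=1\leq \omega(h)$. The main obstacle is the second point: the classical ds-bounds (BCH, HT, Roos, Van Lint--Wilson) all rest on rank estimates for parity-check matrices that remain valid over any field containing the relevant roots of unity, so they apply equally to the cyclic code $C\subseteq \Le(n)$; I would justify this either by explicit appeal to the field-independence of those proofs or by silently extending the ds-bound framework to cyclic codes over arbitrary splitting fields.
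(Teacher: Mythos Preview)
Your proposal is correct and follows essentially the same route as the paper: both set $N=\Z_n\setminus\supp(v)$, take $C$ to be the cyclic code generated by $h=\varphi_{\alpha,f}^{-1}$ in $\Le(n)$, identify $\D_\alpha(C)=N$ via the DFT, and then chain Lemma~\ref{lemma1} with $d(C)\leq\omega(h)$. Your version is in fact more careful than the paper's, which silently passes over both the empty defining set case and the $\F$-versus-$\Le$ issue you flag; the paper simply applies Lemma~\ref{lemma1} to a code in $\Le(n)$ without comment.
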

\begin{proof}
  Set $N=\mathbb{Z}_n\setminus \supp(v)$ and let  $C$ be the abelian code generated by $\varphi_{\alpha,f}^{-1} $ in $\mathbb{L}(n)$. Then $d(C)\leq \omega(\varphi_{\alpha,f}^{-1})$.  By properties of the discrete Fourier transform we have $N=\D_{\bar{\alpha}}(\varphi_{\alpha,f}^{-1} )=\D_{\bar{\alpha}}(C)$ and then, by Lemma~\ref{lemma1} and the definition of apparent distance, $\Delta_{\B}(v)\leq d(C)$. This gives the desired inequality.
\end{proof}

\

Now we define the apparent distance of matrices and hypermatrices with respect to a set $\B$ of ds-bounds.

\

\begin{definition} 
Let $M$ be an $s$-dimensional $I$-hypermatrix over a field $\mathbb{L}$. The \textbf{apparent distance of $M$ with respect to $\B$}, denoted by $\Delta_{\B}(M)$,  is defined as follows:

\begin{enumerate}
 \item $\Delta_{\B}(0)=0$ and, for $s=1$,  Definition~\ref{appdistbound} applies.
\item For $s=2$ and a nonzero matrix $M$, note that $H_M(1,i)$ is the $i$-th row  and $H_M(2,j)$ is the $j$-th column of $M$. Define the \textbf{row support of $M$} as $\supp_1(M)= \{ i\in \{0, \ldots r_1-1\}  \,|\, H_M(1,i)\neq 0 \}$ and the \textbf{column support of $M$} 
as $\supp_2(M)= \{ j\in \{0, \ldots r_2-1\}  \,|\, H_M(2,j)\neq 0 \} $.

Then put
\begin{eqnarray*}
 \omega_{2}(M)&=& \max\{\overline{\delta}(\mathbb{Z}_{r_2}\setminus \supp_2(M))\mid \delta\in\B\},\\
 \epsilon_{2}(M)&=&\max \{ \Delta_\B(H_M(2,j)) \tq j\in \supp_2(M)\}
\end{eqnarray*}
and set $\Delta_{2}(M)=\omega_{2}(M)\cdot \epsilon_{2}(M)$.

Analogously, we compute the apparent distance $\Delta_1(M)$ for the other  variable and finally we define the \textbf{apparent distance of $M$ with respect to $\B$} by
\[
\Delta_{\B}(M) = \max \{\Delta_1(M),\Delta_{2}(M)\}.
\]

\item For $s>2$, proceed as follows: suppose that one knows how  to compute the apparent distance 
$\Delta_{\B}(N)$, for all non zero  hypermatrices of dimension $s-1$. Then first compute the ``hypermatrix support'' of $M\neq 0$ with respect to the $j$-th hypercolumn, that is, $\supp_j(M) = \{ i\in 
\{0, \ldots r_j-1\} \,|\, H_M(j,i)\neq 0\}$. Now put
$ \omega_{j}(M)= \max\{\overline{\delta}(\mathbb{Z}_{r_j}\setminus \supp_j(M))\mid \delta\in \B\}$ and $ \epsilon_{j}(M)=\max \{ \Delta_{\B}(H_M(j,k)) \,|\,k\in \supp_j(M)\}$
and set $\Delta_j(M)=\omega_{j}(M)\cdot \epsilon_{j}(M)$. 

Finally, define the \textbf{apparent distance of $M$ with respect to $\B$} (or the $\B$-apparent distance) as:
$$\Delta_{\B}(M)= \max \left\{ \Delta_j(M)\tq j\in\{1,\dots,s\}\right\}.$$
 \end{enumerate}
\end{definition}

For example, by taking $\B=\{\delta_{BCH}\}$, $\Delta_{\B}(M)$ is the strong apparent distance in \cite{BBCS2}. On the other hand, we note that condition (2) of Remark~\ref{propiedades dist apar en 1 var} not necessarily holds in two or more variables.

Before showing examples, we relate the apparent distance to weight of codewords. For each polynomial $f=\sum_{\mathbf{i}\in I} a_{\mathbf{i}}\mathbf{X}^{\mathbf{i}}$,  consider the hypermatrix  of the coefficients of $f$, denoted by $M(f) = (a_{\mathbf{i}})_{\mathbf{i}\in I}$. For any $j \in \{ 0, \ldots, s\}$, if we write $f=\sum_{k=0}^{r_j-1}f_{j,k}X_j^k$, where $f_{j,k}=f_k(\mathbf{X}_j)$ and $\mathbf{X}_j=X_1\cdots X_{j-1}\cdot X_{j+1} \cdots X_s$, then $M(f_{j,k})= H_M(j,k)$. This means that ''fixed" the variable $X_j$ in $f$, for each power $k$ of $X_j$, the coefficient $f_{j,k}$ is a polynomial in $\mathbf{X}_j$, and $H_M(j,k)$ is the hypermatrix obtained from the coefficients of this  $f_{j,k}$. Now we extend Proposition~\ref{prop1} to several variables.
\begin{theorem}\label{boundDFTnvar}
Let $f\in \mathbb{L}(r_1,\ldots, r_n)$ and $M(f)$ the hypermatrix of its coefficients. Fix any 
$\bar{\alpha}\in U$.  Then  $\Delta_{\B}(M(f)) \leq \omega\left(\varphi^{-1}_{\overline{\alpha},f}\right)$.
 \end{theorem}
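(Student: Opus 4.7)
I would proceed by induction on the number of variables $s$, with the base case $s=1$ being exactly Proposition~\ref{prop1} (for $s=1$, $M(f)$ is the coefficient vector of $f$). For the inductive step, assume the theorem for $(s-1)$-dimensional hypermatrices, fix $f\in\Le(r_1,\dots,r_s)$ nonzero, set $h:=\varphi_{\bar{\alpha},f}^{-1}$, and fix $j\in\{1,\dots,s\}$; since $\Delta_{\B}(M(f))=\max_j\Delta_j(M(f))$, it is enough to show $\Delta_j(M(f))\leq \omega(h)$. For $\mathbf{i}\in I$, denote by $\mathbf{i}_j$ the tuple obtained by deleting the $j$-th entry, and write $\bar{\alpha}_j$ for $\bar{\alpha}$ with $\alpha_j$ removed.

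The heart of the argument is a DFT factorization. Writing $h=\sum_k h_{j,k}X_j^k$ with $h_{j,k}\in \Le(r_1,\dots,r_{j-1},r_{j+1},\dots,r_s)$, set
\[
H_k(\mathbf{X}_j):=h(\mathbf{X}_j,\alpha_j^k)=\sum_l h_{j,l}(\mathbf{X}_j)\,\alpha_j^{kl}.
\]
Expanding $\varphi_{\bar{\alpha},h}(\mathbf{X})$ and regrouping by powers of $X_j$ gives $\varphi_{\bar{\alpha},h}=\sum_k \varphi_{\bar{\alpha}_j,H_k}\,X_j^k$; comparing with $f=\varphi_{\bar{\alpha},h}=\sum_k f_{j,k}X_j^k$ we deduce $\varphi_{\bar{\alpha}_j,H_k}=f_{j,k}$, i.e.\ $H_k=\varphi_{\bar{\alpha}_j,f_{j,k}}^{-1}$. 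In particular $H_k=0 \Leftrightarrow f_{j,k}=0$, so $\supp_j(M(f))=\{k:H_k\neq 0\}$, and the inductive hypothesis yields $\Delta_{\B}(H_{M(f)}(j,k))=\Delta_{\B}(M(f_{j,k}))\leq \omega(H_k)$ for every $k\in\supp_j(M(f))$.

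To bound $\omega_j(M(f))$ I reduce to the univariate Proposition~\ref{prop1} one ``$\mathbf{X}_j$-strip'' of $h$ at a time. For each $\mathbf{i}_j$ let $\tilde u_{\mathbf{i}_j}\in\Le^{r_j}$ have $k$-th entry equal to the coefficient of $\mathbf{X}_j^{\mathbf{i}_j}$ in $h_{j,k}$, and set $\tilde v_{\mathbf{i}_j}:=\varphi_{\alpha_j}(\tilde u_{\mathbf{i}_j})$. A direct check shows $(\tilde v_{\mathbf{i}_j})_k=(H_k)_{\mathbf{i}_j}$, so Proposition~\ref{prop1} applied to the univariate polynomial with coefficient vector $\tilde u_{\mathbf{i}_j}$ gives $\Delta_{\B}(\tilde v_{\mathbf{i}_j})\leq \omega(\tilde u_{\mathbf{i}_j})$. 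Since $(H_k)_{\mathbf{i}_j}\neq 0$ forces $H_k\neq 0$ and hence $k\in\supp_j(M(f))$, we have $\supp(\tilde v_{\mathbf{i}_j})\subseteq \supp_j(M(f))$, and Lemma~\ref{lemma1}(2) produces
\[
\omega_j(M(f))\leq \Delta_{\B}(\tilde v_{\mathbf{i}_j})\leq \omega(\tilde u_{\mathbf{i}_j})
\]
for every $\mathbf{i}_j\in T:=\{\mathbf{i}_j:\tilde u_{\mathbf{i}_j}\neq 0\}$ (the degenerate case $\supp_j(M(f))=\mathbb{Z}_{r_j}$, where the complement is empty, is handled separately).

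Finally, a combinatorial pinch. Summing the preceding inequality over $T$ yields $|T|\cdot\omega_j(M(f))\leq \sum_{\mathbf{i}_j\in T}\omega(\tilde u_{\mathbf{i}_j})=\omega(h)$. The same ``nonzero DFT entry forces nonzero source entry'' argument shows $\supp(H_k)\subseteq T$, whence $\omega(H_k)\leq|T|$ for every $k$ and $\epsilon_j(M(f))\leq \max_k\omega(H_k)\leq |T|$. Combining,
\[
\Delta_j(M(f))=\omega_j(M(f))\cdot \epsilon_j(M(f))\leq \omega_j(M(f))\cdot |T|\leq \omega(h),
\]
and taking the maximum over $j$ closes the induction. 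The only genuine obstacle I anticipate is verifying the factorization $H_k=\varphi_{\bar{\alpha}_j,f_{j,k}}^{-1}$ rigorously (morally, that the multivariate DFT factors as a tensor product of univariate DFTs); once this is available, the rest is careful bookkeeping of the supports of the hypermatrix of $h$.
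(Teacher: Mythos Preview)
Your proof is correct and shares with the paper the same inductive scaffolding and the same key DFT factorization $H_k=\varphi^{-1}_{\bar{\alpha}_j,f_{j,k}}$. The difference lies in how the final product bound is obtained. The paper never introduces your intermediate quantity $|T|$; instead it uses the identity $\omega(\varphi^{-1}_{\bar{\alpha},f})=|\overline{Z(f)}|$ and counts non-zeros of $f$ directly: fixing a $k$ that attains $\epsilon_j(M)$, there are at least $\Delta_{\B}(H_M(j,k))$ values of $t$ with $f_{j,k}(\alpha^t)\neq 0$ (by Proposition~\ref{prop1} or the inductive hypothesis), and for each such $t$ the slice $g_t(X_j)=f(\ldots,\alpha^t,\ldots)$ is nonzero with $\supp(v(g_t))\subseteq\supp_j(M)$, so $|\overline{Z(g_t)}|\geq\omega_j(M)$; summing gives $|\overline{Z(f)}|\geq\epsilon_j(M)\cdot\omega_j(M)$ in one stroke. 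Your route instead slices $h$ in the orthogonal direction and bounds $\omega_j\cdot|T|\leq\omega(h)$ and $\epsilon_j\leq|T|$ separately before multiplying. Both arguments are valid; the paper's is shorter and avoids the summing detour, while yours makes the tensor-product structure of the DFT more explicit and arguably cleaner for the induction.

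One small wording slip: when you invoke Proposition~\ref{prop1} to get $\Delta_{\B}(\tilde v_{\mathbf{i}_j})\leq\omega(\tilde u_{\mathbf{i}_j})$, the proposition should be applied to the polynomial with coefficient vector $\tilde v_{\mathbf{i}_j}$ (so that its inverse DFT is $\tilde u_{\mathbf{i}_j}$), not to the polynomial with coefficient vector $\tilde u_{\mathbf{i}_j}$. The conclusion you draw is correct once this is adjusted.
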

 \begin{proof}
  The case $n=1$ is Proposition~\ref{prop1}. We prove the theorem for matrices. The general case follows directly by induction.
  
  Set $\bar{\alpha}=(\alpha_1,\alpha_2)$ and write $f=\sum_{k=0}^{r_2-1}f_{2,k}X_j^k$. Then $H_M(2,k)$ is the vector of coefficients of $f_{2,k}$. Clearly, $\supp_2(M)=\supp(f_{2,k})$ and, for any $k\in \supp_2(M)$, we have 
	$\Delta_{\B}\left(H_M(2,k)\right)\geq 1$. Then, by Proposition~\ref{prop1}, $\omega\left(\varphi^{-1}_{\alpha_1,f_{2,k}}\right)\geq 1$.
  
  By the definition of discrete Fourier transform $ |\overline{Z(f_{2,k})} | = \omega(\varphi_{\alpha_1,f_{2,k}}^{-1})$, hence there exists $t\in \Z_{r_1}$ (at least one) such that $\alpha_1^t$ is a non zero of $f_{2,k}$.
  
  Set $ g(X_2)=f(\alpha_1^t,X_2)=\sum_{k=0}^{r_2-1}f_{k}(\alpha_1^t)X_2^k$ and note that it is a non zero polynomial. Let $v(g)$ be the vector of coefficients of $g(X_2)$. Then $\supp\left(v(g)\right)\subseteq \supp_2(M)$ and so $\Delta_{\B}(v(g))\geq \max\{\bar{\delta}\left(\Z_{r_2}\setminus \supp_2(M)\right)\tq \delta\in \B\}=\omega_2(M)$.
  
  As $|\overline{Z(g)}|=\omega(\varphi_{\alpha_2,g}^{-1})\geq \Delta_{\B}(v(g))$ then $|\overline{Z(f)}|\geq \omega_2(M)\cdot \Delta_{\B}\left(H_M(2,k)\right)$. Finally, as in the univariate case, it is clear that $ |\overline{Z(f)} | = \omega(\varphi_{\bar{\alpha},f}^{-1})$ and this completes the proof.
 \end{proof}\\

\begin{example}\label{Ex 3por24}
Set $n=72=3\times 24$ and $q=5$. Fix $\alpha_1\in U_{3}$ and $\alpha_2\in U_{24}$ and consider the $5$-orbits matrix $M$ afforded by $D=Q(0,0)\cup Q(0,1)\cup Q(0,2)\cup Q(0,3) \cup  Q(0,6)\cup  Q(0,7) \cup Q(0,9) \cup Q(1,0) \cup  Q(1,1) \cup  Q(1,5) \cup  Q(1,6).$ Choose $\B=\{\delta_{HT},\delta_{BS}\}$, where $\delta_{HT}$ is the Hartman-Tzeng bound in \cite{Roos} (see \cite{HT}) and $\delta_{BS}$ is the Betti-Sala bound in \cite{BettiSala}. One may check that $\supp_1(M)=\{0,1,2\}$ and $\omega_1(M)=1$. On the other hand, $\supp_2(M)=\Z_{24}\setminus \{0,1,5,6\}$ and $\omega_2(M)=4$, by using $\delta_{HT}$. Now $\Delta_{\B}(H_M(1,0))=8$, by using $\delta_{BS}$ (see \cite[Example 4.2]{BettiSala}) which is the maximum, and hence $\epsilon_1(M)=8$. Finally it is clear that $\epsilon_2(M)=2$, so that $\Delta_1(M)=\Delta_2(M)=8$. We conclude $\Delta_{\{\delta_{HT},\delta_{BS}\}}(M)=8$.

The reader may check that $\Delta_{\delta_{HT}}(M)=8$ too, because although $\Delta_1(M)=6$ we get $\Delta_2(M)=8$. Moreover, one may check that for $\B=\{\delta_{BCH}\}$ we get $\Delta_1(M)=5$ and $\Delta_2(M)=6$ (because $\omega_2(M)=3$), hence $\Delta_{\B}(M)=6$.\\
\end{example}

\begin{example}\label{Ex 5por15}
 Set $q=2$, $n=75=5\times 15$ and let $M$ be the $q$-orbits matrix afforded by $D=Q(0,0)\cup Q(0,3)\cup Q(0,5)\cup Q(0,7)\cup Q(1,0)\cup Q(1,2)\cup Q(1,4)$.  In this case, one may check that $\Delta_{\{\delta_{BCH},\delta_{HT}\}}(M)=8$.
\end{example}

\section{Apparent distance of abelian codes}
The following definition changes a little the usual way to present the notions of apparent distance in \cite{Camion} and the strong apparent distance in \cite{BBCS2} (see also \cite{Evans}). We recall that $\B$ denotes a set of ds-bounds, which are used to proceed a concrete computation of the apparent distances.

\begin{definition} \label{appdistcodenvar}
Let $C$ be an abelian code in $\F(r_1,\dots,r_s)$.
\textit{1)} The \textbf{apparent distance of $C$ with respect to  $\bar{\alpha}\in U$ and $\B$} (or the $(\B,\bar{\alpha})$-apparent distance) is
$$  
\Delta_{\B,\overline{\alpha}}(C)= \min \{ \Delta_{\B} (M\left(\varphi_{\bar{\alpha},c})\right) \,|\, c\in C\}.
$$

\textit{2)} The \textbf{apparent distance of $C$ with respect to  $\mathbb{B}$} is
 $$\Delta_{\mathbb{B}}(C)= \max \{ \Delta_{\B,\bar{\alpha}}(C) \,|\, \bar{\alpha}\in U\}.$$
\end{definition}

The following result is consequence of Theorem~\ref{boundDFTnvar}.\\

\begin{corollary} \label{corTheo}
For any abelian code $C$ in $\F(r_1,\dots,r_s)$ and any $\B$ as above, $\Delta_{\mathbb{B}}(C) \leq d(C)$.
\end{corollary}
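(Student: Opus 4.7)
The plan is to chain together the inequality from Theorem~\ref{boundDFTnvar} with the two nested extrema in Definition~\ref{appdistcodenvar}. Since $\Delta_{\mathbb{B}}(C)$ is a maximum (over $\bar{\alpha}\in U$) of minima (over $c\in C$) of $\Delta_{\mathbb{B}}(M(\varphi_{\bar{\alpha},c}))$, the most natural strategy is: fix $\bar{\alpha}$ arbitrary, pick a single convenient codeword $c^{*}$ to witness an upper bound on the inner minimum, and observe that this bound is independent of $\bar{\alpha}$ so it survives the outer maximum.

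Concretely, I would first fix $\bar{\alpha}\in U$ and choose a minimum-weight nonzero $c^{*}\in C$, so that $\omega(c^{*})=d(C)$. Then I would set $f=\varphi_{\bar{\alpha},c^{*}}\in \mathbb{L}(r_{1},\dots,r_{s})$ and invoke Theorem~\ref{boundDFTnvar}, which gives
\[
\Delta_{\mathbb{B}}\bigl(M(\varphi_{\bar{\alpha},c^{*}})\bigr)=\Delta_{\mathbb{B}}(M(f))\leq \omega\bigl(\varphi^{-1}_{\bar{\alpha},f}\bigr).
\]
Since $\varphi_{\bar{\alpha}}$ is an isomorphism of algebras (as recalled at the end of Section 2), $\varphi^{-1}_{\bar{\alpha},f}=c^{*}$, and therefore the right-hand side is simply $\omega(c^{*})=d(C)$.

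From here the argument is purely definitional: by part (1) of Definition~\ref{appdistcodenvar},
\[
\Delta_{\mathbb{B},\bar{\alpha}}(C)=\min\{\Delta_{\mathbb{B}}(M(\varphi_{\bar{\alpha},c}))\mid c\in C\}\leq \Delta_{\mathbb{B}}(M(\varphi_{\bar{\alpha},c^{*}}))\leq d(C),
\]
and since $\bar{\alpha}$ was arbitrary, part (2) of the same definition gives $\Delta_{\mathbb{B}}(C)=\max_{\bar{\alpha}\in U}\Delta_{\mathbb{B},\bar{\alpha}}(C)\leq d(C)$.

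There is really no main obstacle here: all the genuine content lives in Theorem~\ref{boundDFTnvar}, and the corollary is a one-line bookkeeping argument. The only point requiring a brief mental check is that the minimum in the definition of $\Delta_{\mathbb{B},\bar{\alpha}}(C)$ is meant over nonzero codewords (otherwise the quantity is trivially $0$ and the statement is vacuous), which is consistent with the standing convention for apparent distance of a vector and is clearly what is intended since a minimum-weight codeword $c^{*}$ exists and realizes $d(C)$.
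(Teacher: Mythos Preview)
Your proof is correct and follows essentially the same approach as the paper: pick a minimum-weight codeword $g$, apply Theorem~\ref{boundDFTnvar} with $f=\varphi_{\bar{\alpha},g}$ to get $\Delta_{\mathbb{B}}(M(\varphi_{\bar{\alpha},g}))\leq \omega(g)=d(C)$ for every $\bar{\alpha}\in U$, and then read off the conclusion from Definition~\ref{appdistcodenvar}. The paper's version is just a terser rendition of the same argument.
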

\begin{proof}
 Let $g\in C$ such that $\omega(g)=d(C)$. By Theorem~\ref{boundDFTnvar}, $\Delta_{\B}\left(M\left(\varphi_{\bar{\alpha},g}\right)\right)\leq \omega(g)$ for any $\alpha\in U$. From this the result follows directly.
\end{proof}\\

It is certain that to compute the apparent distance for each element of a code in order to obtain the apparent distance of the code can be as hard work as to compute the minimum distance of such a code. Hence, to improve the efficiency of the computation the following result tells us that we may restrict our attention to the idempotents of the code. It also allows us to reformulate Definition \ref{appdistcodenvar} as it is presented in the mentioned papers.

 \begin{proposition} \label{prop2}
Let $C$ be an abelian code in $\F(r_1,\dots,r_s)$. The apparent distance of $C$ with respect to  $\bar{\alpha}\in U$ and $\B$ verifies the equality
$$ \Delta_{\B, \bar{\alpha}}(C) = \min \{ \Delta_{\B} \left(M(\varphi_{\bar{\alpha},e})\right) \,|\, e^2=e\in C\}.$$
\end{proposition}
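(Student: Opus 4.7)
The plan is to exploit the fact that the discrete Fourier transform $\varphi_{\bar\alpha}$ is an algebra isomorphism onto $(\Le^{|I|},\star)$ with coordinatewise product, under which idempotents of the code correspond to characteristic vectors of certain subsets of $I$. For an arbitrary codeword $c\in C$, I will exhibit an idempotent $e\in C$ whose DFT has exactly the same support as $\varphi_{\bar\alpha,c}$, and then argue that the $\B$-apparent distance of a hypermatrix depends only on the zero/nonzero pattern of its entries. Combining these two facts forces the minimum over codewords to equal the minimum over idempotents.

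The first step is to observe that if $c\in C$, then $\supp(\varphi_{\bar\alpha,c})$ is a union of $q$-orbits in $I$. This is because $\varphi_{\bar\alpha,c}(\bar\alpha^{q\bf i})=\varphi_{\bar\alpha,c}(\bar\alpha^{\bf i})^q$ (the DFT interacts with the Frobenius in the expected way), and raising to the $q$-th power is a bijection on $\Le$ preserving zeros. Consequently the characteristic vector $\chi_{\supp(\varphi_{\bar\alpha,c})}\in \Le^{|I|}$ is Frobenius-stable, so under $\varphi_{\bar\alpha}^{-1}$ it comes from an element $e\in\F(r_1,\dots,r_s)$. Because $\chi_{\supp(\varphi_{\bar\alpha,c})}\star \chi_{\supp(\varphi_{\bar\alpha,c})}=\chi_{\supp(\varphi_{\bar\alpha,c})}$ in the coordinatewise algebra, $e$ is an idempotent. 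Moreover, the inclusion $\supp(\varphi_{\bar\alpha,c})\subseteq I\setminus \D_{\bar\alpha}(C)$ (which holds for every $c\in C$) transfers to $e$, showing $e\in C$.

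The second step is to verify that $\Delta_{\B}(M)$ is a function of the support pattern of $M$ alone. This is proved by induction on $s$. For $s=1$ it is immediate from Definition~\ref{appdistbound}, since $\delta^{*}(v)$ is determined by $\Z_n\setminus\supp(v)$. For $s\geq 2$, each quantity appearing in the recursive definition $\Delta_j(M)=\omega_j(M)\cdot\epsilon_j(M)$ depends only on which hypercolumns $H_M(j,k)$ are zero (through $\supp_j(M)$ and hence $\omega_j(M)$) and on the apparent distances of the nonzero hypercolumns, which by the induction hypothesis depend only on their own support patterns. Applied to our situation, since $M(\varphi_{\bar\alpha,c})$ and $M(\varphi_{\bar\alpha,e})$ have identical supports, we obtain
\[
\Delta_{\B}\bigl(M(\varphi_{\bar\alpha,c})\bigr)=\Delta_{\B}\bigl(M(\varphi_{\bar\alpha,e})\bigr).
\]

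Combining these observations, for every $c\in C$ there exists an idempotent $e\in C$ whose $\B$-apparent distance matches that of $c$. Therefore
\[
\min\{\Delta_{\B}(M(\varphi_{\bar\alpha,c}))\mid c\in C\}\geq \min\{\Delta_{\B}(M(\varphi_{\bar\alpha,e}))\mid e^2=e\in C\},
\]
and the reverse inequality is trivial because every idempotent of $C$ is in particular a codeword. The main technical point likely to require care is the inductive verification that $\Delta_{\B}$ is a support-only invariant (since the paper has just noted that the monotonicity property of Remark~\ref{propiedades dist apar en 1 var}(2) can fail in several variables), but the argument reduces straightforwardly to the fact that every ingredient of the recursive definition is phrased in terms of supports.
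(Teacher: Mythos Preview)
Your proof is correct and follows essentially the same approach as the paper: for each codeword $c$ you produce an idempotent $e\in C$ with the same DFT support, and then argue that $\Delta_{\B}$ depends only on the support pattern. The only cosmetic difference is in how the idempotent is obtained: the paper invokes semisimplicity of $\F(r_1,\dots,r_s)$ to take $e$ as the idempotent generator of the principal ideal $(c)$ (so that $\D_{\bar\alpha}(c)=\D_{\bar\alpha}(e)$ directly), whereas you build the same $e$ explicitly as $\varphi_{\bar\alpha}^{-1}$ of the characteristic vector of $\supp(\varphi_{\bar\alpha,c})$, checking Frobenius-stability to land in $\F$; these yield the same element, and your inductive justification that $\Delta_{\B}$ is a support-only invariant is exactly the point the paper asserts as ``easy to see.''
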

\begin{proof}
 Consider any $c\in C$. Since $\F(r_1,\dots,r_s)$ is semisimple, then there exists an idempotent $e\in C$ such that the ideals generated by $c$ and $e$ in $\F(r_1,\dots,r_s)$ coincide; that is, $(c)=(e)$, and so, $\D_{\bar{\alpha}}(c)=\D_{\bar{\alpha}}(e)$. This means that $\supp\left(M(\varphi_{\bar{\alpha},c})\right) =\supp\left(M(\varphi_{\bar{\alpha},e})\right)$. Now, one may see that the computation of the apparent distance is based on the fact that the entries (of the matrices) are zero or not; that is, once an entry is non zero, its specific value is irrelevant. 
%We notice that the values of the entries in the matrices are not relevant for the computation of the apparent %distance. What does matters is the fact of them being zero or not. 
Considering this fact, it is easy to see that $ \Delta_{\B} \left(M(\varphi_{\bar{\alpha},c})\right)= \Delta_{\B} \left(M(\varphi_{\bar{\alpha},e})\right)$ and so we get the desired equality.
\end{proof}\\

If $e\in\F(r_1,\dots,r_s)$ is an idempotent and $E$ is the ideal generated by $e$ then $\varphi_{\bar{\alpha}, e}\star\varphi_{\bar{\alpha}, e}=\varphi_{\bar{\alpha}, e}$, for any $\bar{\alpha} \in U$.  Thus if $\varphi_{\bar{\alpha}, e}=\sum_{\bf{i}\in I} a_{\bf{i}}X^{\bf{i}}$, we have $a_{\bf{i}}\in \{1,0\}\subseteq \F$ and $a_{\bf{i}}=0$ if and only if $\bf{i}\in \D_{\bar{\alpha}}(E)$. Hence $M(\varphi_{\bar{\alpha},e})=M(\D_{\bar{\alpha}}(E))$. Conversely, let $M$ be a  hypermatrix afforded by a set $D$ which is a union of $q$-orbits. We know that $D$ determines a unique ideal $C$ in $\F(r_1,\dots,r_s)$ such that $\D_{\bar{\alpha}}(C)=D$. Let $e\in C$ be its generating idempotent. One may verify that  $M(\varphi_{\bar{\alpha}, e})=M(D)$.

Now let $C$ be an abelian code, $\bar{\alpha} \in U$ and let $M$ be the hypermatrix afforded by $\D_{\bar{\alpha}}(C)$. For any $q$-orbits hypermatrix $P\leq M$ [see the ordering (\ref{matrixordering})] there exists a unique idempotent $e'\in C$ such that $P=M(\varphi_{\bar{\alpha}, e'})$ and for any codeword $f\in C$ there is a unique idempotent $e(f)$ such that $\Delta_{\B}\left(M(\varphi_{\bar{\alpha},f})\right)= \Delta_{\B} \left(M(\varphi_{\bar{\alpha},e(f)})\right)$. So
%So we may conclude that the apparent distance of an abelian code may be computed by means of $q$-orbits hypermatrices $P\leq M(\varphi_{\alpha, e})$; that is
\begin{eqnarray*}
  \min\{\Delta_{\B}(P)\tq 0\neq P\leq M\}= \\ \min\{\Delta_{\B}(M(\varphi_{\bar{\alpha}, e}))\tq
  0\neq e^2=e\in C\}= \Delta_{\B}(C).
\end{eqnarray*}
This fact drives us to give the following definition.
\begin{definition}
 For a $q$-orbits hypermatrix $M$, its \textbf{minimum $\mathbb{B}$-apparent distance} is
	\[\mathbb{B}\mad(M)=\min\{\Delta_{\B}(P)\tq 0\neq P\leq M\}.\]
\end{definition}

Finally, in the next theorem we set the relationship between the apparent distance of an abelian code and the minimum apparent distance of hypermatrices.

\begin{theorem}
Let $C$ be an abelian code in $\F(r_1,\dots,r_s)$ and let $e$ be its generating idempotent. For any $\bar{\alpha}\in U$ we have $\Delta_{\B,\bar{\alpha}}(C)= \mathbb{B}\mad\left(M(\varphi_{\bar{\alpha},e})\right)$. Therefore, $\Delta_{\mathbb{B}} (C)= \max\{\mathbb{B}\mad\left(M(\varphi_{\bar{\alpha},e})\right)\tq \bar{\alpha}\in U\}$.
\end{theorem}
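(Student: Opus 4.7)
The plan is to invoke Proposition~\ref{prop2} to rewrite $\Delta_{\B,\bar{\alpha}}(C)$ as a minimum running over non-zero idempotents of $C$, then to identify that family of idempotents with the family of non-zero $q$-orbits hypermatrices dominated by $M(\varphi_{\bar{\alpha},e})$, and finally to take the maximum over $\bar{\alpha}\in U$ to obtain the second assertion. By Proposition~\ref{prop2} one has
\[ \Delta_{\B,\bar{\alpha}}(C) = \min\left\{ \Delta_{\B}\left(M(\varphi_{\bar{\alpha},e'})\right) \tq 0\neq (e')^2=e'\in C \right\}, \]
so the first equality in the theorem reduces to showing that the set $\left\{ M(\varphi_{\bar{\alpha},e'}) \tq 0\neq (e')^2=e'\in C \right\}$ coincides with the set $\left\{ P \tq 0\neq P\leq M(\varphi_{\bar{\alpha},e}),\ P\text{ is a }q\text{-orbits hypermatrix} \right\}$.

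For the forward inclusion, I would note that any non-zero idempotent $e'\in C$ generates an ideal $(e')\subseteq C$, so $\D_{\bar{\alpha}}((e'))\supseteq \D_{\bar{\alpha}}(C)$, which by the discussion preceding the theorem translates, under the ordering (\ref{matrixordering}), to $M(\varphi_{\bar{\alpha},e'})\leq M(\varphi_{\bar{\alpha},e})$. For the reverse inclusion, if $P\leq M(\varphi_{\bar{\alpha},e})$ is a non-zero $q$-orbits hypermatrix, then $P$ is afforded by some union of $q$-orbits $D\supseteq \D_{\bar{\alpha}}(C)$, and this $D$ determines a unique ideal $E\subseteq C$ with $\D_{\bar{\alpha}}(E)=D$; by semisimplicity of $\F(r_1,\dots,r_s)$, $E$ has a unique generating idempotent $e'\in C$, and the identification $M(\varphi_{\bar{\alpha},e'})=M(\D_{\bar{\alpha}}(E))=P$ noted in the same paragraph closes the correspondence. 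Taking $\min\Delta_{\B}$ over the common set then yields $\Delta_{\B,\bar{\alpha}}(C)=\B\mad\left(M(\varphi_{\bar{\alpha},e})\right)$.

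The ``Therefore'' statement follows immediately on taking the maximum over $\bar{\alpha}\in U$ in the identity just obtained, in view of item (2) of Definition~\ref{appdistcodenvar}. I do not anticipate a serious obstacle, since the two ingredients needed---uniqueness of idempotent generators coming from semisimplicity, and the translation between the ordering (\ref{matrixordering}) and the inclusion of defining sets---are already laid out in the paragraph immediately preceding the theorem. The one step where I would be careful is verifying the bijection rigorously at the boundary case, ensuring the exclusion of $P=0$ matches the exclusion of the zero idempotent on both sides, so that the two minima are really taken over the same family of hypermatrices.
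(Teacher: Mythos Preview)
Your proposal is correct and is essentially the argument the paper intends: the paper's own proof consists of the single line ``It follows directly from the preceding paragraphs,'' and those paragraphs set up precisely the bijection you describe between non-zero idempotents of $C$ and non-zero $q$-orbits hypermatrices $P\leq M(\varphi_{\bar{\alpha},e})$, together with the reduction via Proposition~\ref{prop2}. Your write-up simply makes explicit (with the forward/reverse inclusions and the boundary check $e'=0\Leftrightarrow P=0$) what the paper leaves to the reader.
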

\begin{proof}
It follows  directly from the preceding  paragraphs.
\end{proof}

\section{Computing minimum apparent distance}

In \cite{BBCS2} it is presented an algorithm to find, for any abelian code, a list of matrices representing some of its idempotents (or hypermatrices in case of more than 2 variables) whose apparent distance based on the BCH bound (called the strong apparent distance) goes decreasing until the minimum value is reached. It is a kind of ``suitable idempotents chase through hypermatrices'' \cite[p. 2]{BBCS2}.
This algorithm is based on certain manipulation of the hypermatrix afforded by the defining set of the abelian code. 
It is not so hard to see that it is possible to obtain an analogous algorithm in our case.

%\textcolor{red}{ We reproduce here the result and algorithm in the case of two variables under our notation. Then %we will use the mentioned algorithm to improve bounds of cyclic codes by abelian codes and abelian codes known.}

We reproduce here the result and algorithm in the case of two variables under our notation. Then we will use the mentioned algorithm to improve the searching of new bounds for abelian codes.

\

\begin{definition}\label{indices involucradas}
 With the notation of the previous section, let $D$ be a union of $q$-orbits and $M=M(D)$ the hypermatrix afforded by $D$. We say that $H_M(j,k)$ is an \textbf{involved hypercolumn (row or column for two variables) in the computation of $\Delta_{\B}(M)$}, if $\Delta_{\B}(H_M(j,k))=\epsilon_{j}(M)$ and $\Delta_{j}(M)=\Delta_{\B}(M)$.
\end{definition}

\

 We denote the set of indices of involved hypercolumns by $I_p(M)$.
Note that the involved hypercolumns are those which contribute in the computation of the $\B$-apparent distance.

The next result shows a sufficient condition to get at once the minimum $\mathbb{B}$-apparent distance of a hypermatrix.

\

\begin{proposition}\label{matrizdamvarias}
With the notation as above, let $D$ be a union of $q$-orbits and $M=M(D)$ the hypermatrix afforded by $D$. If $\Delta_{\B}(H_M(j,k))=1$, for some $H_M(j,k)\in I_p(M)$, then $\mathbb{B}\mad(M)=\Delta_{\B}(M)$. 
\end{proposition}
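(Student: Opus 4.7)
The plan is to show the two inequalities that make up $\mathbb{B}\mad(M)=\Delta_{\B}(M)$. One direction is free: since $M$ itself is a nonzero $q$-orbits hypermatrix with $M\leq M$, the definition of $\mathbb{B}\mad$ immediately gives $\mathbb{B}\mad(M)\leq \Delta_{\B}(M)$. The whole task is therefore to prove $\Delta_{\B}(P)\geq \Delta_{\B}(M)$ for every nonzero $q$-orbits hypermatrix $P\leq M$.

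The key observation I would exploit is that, because $H_M(j,k)\in I_p(M)$ is involved and has $\Delta_{\B}(H_M(j,k))=1$, the definition of the $\B$-apparent distance forces $\epsilon_j(M)=1$ and $\Delta_{\B}(M)=\Delta_j(M)=\omega_j(M)\cdot 1=\omega_j(M)$. So it suffices to bound $\Delta_{\B}(P)$ below by $\omega_j(M)$, and the natural candidate is the same ``$j$-direction'' contribution $\Delta_j(P)=\omega_j(P)\cdot\epsilon_j(P)$, which is in turn $\leq \Delta_{\B}(P)$ by the definition of $\Delta_{\B}$.

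Now I would break $\Delta_j(P)$ into its two factors. For the $\omega$-factor, $P\leq M$ means $\supp(P)\subseteq\supp(M)$, hence $\supp_j(P)\subseteq\supp_j(M)$ and consequently $\Z_{r_j}\setminus\supp_j(P)\supseteq \Z_{r_j}\setminus\supp_j(M)$; applying the monotonicity of $\overline{\delta}$ from Lemma~\ref{lemma1}(2) to each $\delta\in\B$ and then taking the maximum over $\B$ yields $\omega_j(P)\geq \omega_j(M)$. For the $\epsilon$-factor, since $P\neq 0$ the set $\supp_j(P)$ is non-empty, so there exists some $k_0\in\supp_j(P)$ with $H_P(j,k_0)\neq 0$; by Remark~\ref{propiedades dist apar en 1 var}(1) (and its evident inductive extension to hypermatrices, using that $\overline{\delta}(N)\geq 1$ for any $N$ by Definition~\ref{boundg}(3)), the $\B$-apparent distance of any nonzero hypermatrix is at least $1$, so $\epsilon_j(P)\geq 1$. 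Multiplying, $\Delta_{\B}(P)\geq \Delta_j(P)=\omega_j(P)\cdot\epsilon_j(P)\geq \omega_j(M)=\Delta_{\B}(M)$, and taking the minimum over $0\neq P\leq M$ closes the argument.

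The main obstacle is minor and bookkeeping only: one has to make sure that ``$\Delta_{\B}(H)\geq 1$ for every nonzero hypermatrix $H$'' holds in all dimensions, not just for vectors as stated in Remark~\ref{propiedades dist apar en 1 var}. This is an easy induction on $s$: the base case is the remark, and the inductive step combines $\omega_\ell(H)\geq 1$ (guaranteed by Definition~\ref{boundg}(3)) with $\epsilon_\ell(H)\geq 1$ (the inductive hypothesis applied to any nonzero lower-dimensional hypercolumn of $H$). With this lemma in hand, the rest of the argument is a direct unwinding of definitions.
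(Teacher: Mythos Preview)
Your proof is correct and complete. The paper itself does not give a detailed argument here; it simply states that the proof is a modification of \cite[Proposition~23]{BBCS2} adapted to the setting of arbitrary ds-bounds. Your argument is exactly the natural adaptation one would expect: you correctly extract from $H_M(j,k)\in I_p(M)$ that $\epsilon_j(M)=1$ and $\Delta_{\B}(M)=\omega_j(M)$, and then use the monotonicity of $\omega_j$ under the ordering~(\ref{matrixordering}) together with the (inductively justified) fact that nonzero hypermatrices have $\B$-apparent distance at least~$1$ to bound $\Delta_j(P)$ from below. There is nothing to add; your proposal is precisely the proof the paper defers to the reference.
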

\begin{proof}
 It is a modification of that in \cite[Proposition 23]{BBCS2} having in mind the use of different ds-bounds.
\end{proof}\\

\begin{theorem} \label{teodam2}
Let $\Q$ be the set of all $q$-orbits modulo $(r_1,r_2)$, $\mu\in \{1,\dots, |\Q|-1\}$ and $\left\{Q_j\right\}_{j=1}^{\mu} $ a subset of $\Q$. Set $D=\cup_{j=1}^\mu Q_j$ and $M=M(D)$. Then there exist two sequences: the first one is formed by nonzero $q$-orbits matrices, $M=M_0 > \dots > M_l\neq 0$ and the second one is formed by positive integers
$m_0 \geq  \dots \geq m_l, $ with $ l\leq \mu$ and $m_i\leq \Delta_{\B}(M_i)$, verifying the following property: 

If $P$ is a $q$-orbits matrix such that $0 \neq P \leq M$, then  $\Delta_{\B}(P) \geq m_l$ and if $\Delta_{\B}(P) < m_{i-1}$ then $P \leq M_i$, where $0 < i\leq l$.

 Moreover,  if $l' \in \{0, \dots , l\}$ is the first element satisfying $m_{l'}=m_l$ then $\Delta_{\B}(M_{l'})=\mathbb{B}\mad(M)$. 
\end{theorem}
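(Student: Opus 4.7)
The plan is to construct the two sequences inductively, following the algorithmic scheme of \cite{BBCS2} (which treats the case $\mathbb{B} = \{\delta_{BCH}\}$) and replacing the BCH-specific reduction step by its $\mathbb{B}$-generalisation via Proposition~\ref{matrizdamvarias}. I set $M_0 = M$ and $m_0 = \Delta_{\B}(M_0)$, and maintain throughout the construction the two invariants: (i) $m_i \leq \Delta_{\B}(M_i)$, and (ii) every $q$-orbits matrix $P$ with $0 \neq P \leq M$ and $\Delta_{\B}(P) < m_{i-1}$ actually satisfies $P \leq M_i$.

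At stage $i$, I look at the set $I_p(M_i)$ of involved hypercolumns, namely those realising $\Delta_{\B}(M_i) = m_i$. If some $H_{M_i}(j,k) \in I_p(M_i)$ already has $\Delta_{\B}(H_{M_i}(j,k)) = 1$, then by Proposition~\ref{matrizdamvarias} we obtain $\mathbb{B}\mad(M_i) = \Delta_{\B}(M_i)$, and I terminate the sequence at $l = i$. Otherwise every involved hypercolumn has apparent distance at least $2$, and I select a $q$-orbit $Q \subseteq \supp(M_i)$ meeting one of the involved hypercolumns, setting $M_{i+1} = M(D(M_i) \cup Q)$ and $m_{i+1} = \Delta_{\B}(M_{i+1})$. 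Invariant (i) for $i+1$ is automatic. For invariant (ii) the crucial observation is that any $q$-orbits matrix $P < M_i$ with $\Delta_{\B}(P) < m_i$ must differ from $M_i$ on at least one involved hypercolumn: otherwise, recursively unfolding the definition of $\Delta_{\B}$ and using the monotonicity in Remark~\ref{propiedades dist apar en 1 var}(2) together with Lemma~\ref{lemma1}(2) would force $\Delta_{\B}(P) \geq m_i$. Processing the candidate orbits in a fixed order (exactly as in the BCH case) then guarantees $P \leq M_{i+1}$.

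Termination is automatic, since each step strictly enlarges $D(M_i)$ in the ordering~(\ref{matrixordering}); the bound $l \leq \mu$ follows by controlling how many orbits from $\supp(M_0)$ must be absorbed before the apparent distance of some involved hypercolumn collapses to~$1$, which cannot exceed the number of orbits contributing to the initial computation. For the final assertion, let $l' \leq l$ be the first index with $m_{l'} = m_l$. Invariant (ii) applied at $l'$ shows that every nonzero $q$-orbits matrix $P \leq M$ with $\Delta_{\B}(P) < m_{l'-1}$ lies below $M_{l'}$; the stopping criterion together with Proposition~\ref{matrizdamvarias} gives $\mathbb{B}\mad(M_{l'}) = \Delta_{\B}(M_{l'})$; and since $m_l$ is the smallest value attained along the sequence, these together identify $\Delta_{\B}(M_{l'})$ with $\mathbb{B}\mad(M)$.

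The main obstacle will be specifying the precise rule for selecting the $q$-orbit at each step and verifying that invariant (ii) is genuinely propagated; this amounts to the same bookkeeping argument as in the BCH version of \cite{BBCS2}, but relying on the abstract monotonicity of $\overline{\delta}$ in Lemma~\ref{lemma1} in place of the properties specific to strings of consecutive roots, so that an arbitrary family $\B$ of ds-bounds can be handled uniformly.
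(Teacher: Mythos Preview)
Your approach is the same as the paper's: the paper's own proof is a single sentence referring to \cite[Proposition~25]{BBCS2} and noting that the argument carries over once BCH-specific steps are replaced by the abstract monotonicity of $\overline{\delta}$ (Lemma~\ref{lemma1}(2)) and the stopping criterion of Proposition~\ref{matrizdamvarias}. Your sketch makes the same reduction.

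Two details of your construction, however, do not match the algorithm that the paper actually records immediately after the theorem, and as written they break invariant~(ii). First, the reduction step does not remove a \emph{single} $q$-orbit meeting an involved hypercolumn; it removes \emph{all} orbits meeting \emph{every} involved hypercolumn simultaneously (Step~2(b) sets $S=\bigcup_{H_M(k,b)\in I_p(M)}\supp(H_M(k,b))$ and zeros every $Q(k,b)$ with $(k,b)\in S$). Your monotonicity argument only shows that a $q$-orbits matrix $P\leq M_i$ with $\Delta_{\B}(P)<m_i$ must be \emph{strictly smaller} than $M_i$ on each involved hypercolumn; it does not single out which orbit $P$ is missing, so deleting one prechosen orbit need not force $P\leq M_{i+1}$. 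Second, the paper sets $m_{i+1}=\min\{m_i,\Delta_{\B}(M_{i+1})\}$ (Step~3(b)), not $m_{i+1}=\Delta_{\B}(M_{i+1})$; without the minimum the sequence $m_0\geq\cdots\geq m_l$ need not be decreasing, since the $\B$-apparent distance is not monotone in the ordering~(\ref{matrixordering}) for matrices. With these two corrections your outline coincides with the paper's.
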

\begin{proof}
 It follows the same lines of that in \cite[Proposition 25]{BBCS2} having in mind the use of different ds-bounds.
\end{proof} \\

\noindent\textbf{Algorithm for matrices.}   

Set $I=\Z_{r_1}\times\Z_{r_2}$. Consider the matrix $M=\left(a_{ij}\right)_{(i,j)\in I}$ and a set $\B$ of ds-bounds.
\begin{itemize}
\item[]Step 1. Compute the apparent distance of $M$ with respect to $\B$ and set $m_0= \Delta_{\B}(M)$.%  and $Ip(M)$ the set of involved pairs.

\item[]Step 2. 
  \begin{itemize}
      \item[a)]  If there exists $H_M(j,k)\in Ip(M)$ (see Definition~\ref{indices involucradas}) such that $\Delta_{\B}(H_M(j,k))=1$ then we finish giving the sequences $M=M_0$ and $m_0=\Delta_{\B}(M)$ (because of Proposition~\ref{matrizdamvarias}).
      \item[b)]  If $\Delta_{\B}(H_M(j,k)) \neq 1$ for all $H_M(k,b)\in Ip(M)$, we set $$S=\bigcup_{H_M(k,b)\in Ip(M)}supp (H_M(k,b))$$ and construct the matrix $M_1=\left(a_{ij}\right)_{(i,j)\in I}$ such that 
      \[a_{ij}=
      \begin{cases} 
      0 & \text{if}\; (i,j)\in \cup \{Q(k,b)\tq (k,b)\in S\} \\
      m_{ij} & \text{otherwise.}  
    \end{cases}\]
  \end{itemize}
(In other words, $M_1<M$ is the matrix with maximum support such that its involved rows and columns are zero. One may prove that if $0\neq P <M$ and $\Delta_{\B}(P)< m_0$ then $P \leq M_1$.)

\item[]Step 3.
  \begin{itemize}
      \item [a)] If $M_1=0$ then we finish giving the sequences $M=M_0$ and $m_0= \Delta_{\B}(M)$.
      \item [b)] If $M_1 \neq 0$, we set $m_1= \min \{m_0, \Delta_{\B}(M_1)\}$, and we get the sequences $M=M_0>M_1$ and $m_0\geq m_1$. Then, we go back to Step 1 with  $M_1$ in the place of $M$ and $m_1$ in the place  of $m_0$. $\blacksquare$\\
  \end{itemize}
 \end{itemize}
\begin{remark}\label{complejidad}
If the matrix has $\mu$ $q$-orbits the algorithm has at most $\mu$ steps. $\blacksquare$
\end{remark}

\section{Examples}

\begin{example}
We take the setting of Example~\ref{Ex 3por24} and consider the abelian code $C$ with $\D(C)=D$. In this case, the matrix $M=M(\D(C))$ is the same as that in the mentioned example. Choose again $\B=\{\delta_{HT},\delta_{BS}\}$.

By Example~\ref{Ex 3por24}, $m_0=8$. Now $I_p(M)=\{H_M(1,0)\}\cup\{H_M(2,j)\tq j\in \{2,3,7,9, 10, 11, 15,21\}\}$. One may check that $\Delta_{\B}(M_1)=\Delta_1(M_1)=\Delta_2(M_1)=16$. Finally, as $\{H_M(1,1),H_M(1,2)\}\subset I_p(M_1)$ then $M_2=0$.

This code has $\dim_{\mathbb{F}_5}(C)=52$ and $\Delta_{\B}(C)=8$. The closer code we know is a $(105,51,7)$ binary cyclic code in \cite[Table II]{HTY}. The known bounds of linear codes are between 10-15.
\end{example}

\begin{example}
Consider the abelian code related with the matrix in Example~\ref{Ex 5por15}. So
 $q=2$, $n=75=5\times 15$. Let $C$ be the abelian code with defining set $\D(C)=D$, and choose, as in the mentioned example, $\B=\{\delta_{BCH},\delta_{HT}\}$. In this case, $\Delta_{\B}(M_0)=8$,  $\Delta_{\B}(M_1)=8$  and $\Delta_{\B}(M_2)=15$.
 
 This code has $\dim_{\mathbb F_2}C=52$ and $\Delta_{\B}(C)=8$. It has the largest known bound to the minimum distance among the linear codes with the same length and dimension (see \cite[p. 670]{BVtabla} or \cite{Gtabla}). \\
\end{example}

In Table~\ref{tabla} we include binary abelian codes of lengths $n=3\times r_2$ that may be of interest. The codes $C_1$, $C_2$ and $C _5$ have the largest known bound to the minimum distance among the \textit{linear codes} with the same length and dimension (see \cite{BVtabla} or \cite{Gtabla}). The codes $C_3$ and $C_4$ have a bound larger than that of any cyclic code with the same length and dimension (see \cite{ZJtabla}). The ``shifting bound'', denoted $\delta_{SB}$, is considered for the code $C_4$  (see \cite[Example 7]{vLW}). The reader may compare this codes with those in \cite{ZK1} and \cite{Jensen}.
\begin{table}[!t]
\renewcommand{\arraystretch}{1.3}
\caption{Binary abelian codes with $n=3\times r_2$ and big rate}
\label{tabla}
\centering
% Some packages, such as MDW tools, offer better commands for making tables
% than the plain LaTeX2e tabular which is used here.
\begin{tabular}{|c||c||c||p{2cm}||c||c|}

\hline
Code&$n$ & $\dim$& $\D(C)$ representatives& $\B$& $\Delta$\\
\hline
$C_1$&$21$ & 16 & $(0,1),(1,0)$& $\delta_{BCH}$ & 3\\
\hline
$C_2$&$45$ & 39 & $(0,1),(1,0)$& $\delta_{BCH}$ & 3\\
\hline
$C_3$&$51$ & 35 & $(0,1),(1,3)$&  $\delta_{HT}$ & 3\\
\hline
$C_4$&$69$ & 46 & $(0,0),(1,1)$& $\delta_{SB},\,\delta_{BCH}$ & 6\\
\hline
$C_5$&$105$ & 93 & $(0,5)$, $(0,7)$, \newline $(0,15)$,$(1,0)$& $\delta_{HT},\,\delta_{BCH}$ & 8\\
\hline
\end{tabular}
\end{table}

\section{Conclusion}
We have developed a technique to extend any ds-bound for cyclic codes to multivariate codes which can be applied to codes of arbitrary length, mainly for those whose minimum distance is still unknown. We use this technique to improve the searching of new bounds for abelian codes.

% \begin{IEEEbiographynophoto}{J. J. Bernal}
%  was born in Murcia, Spain, in May 1976. He received the B.S. degree in Mathematics in 1999 and the M.S. degree in Advanced Mathematics in 2007 from the University of Murcia. He received the Ph. D. degree in 2011 from the University of Murcia. His research interests and publications include information theory, coding theory and cryptography.
% \end{IEEEbiographynophoto}
% \begin{IEEEbiographynophoto}{M. Guerreiro}
% line 1. \newline line 2  \newline line 3  \newline line 4  \newline line 5  \newline line 6
% \end{IEEEbiographynophoto}
% \begin{IEEEbiographynophoto}{J. J. Sim\'on}
% received his degree in Mathematics in the UNAM of M\'exico in 1988 and its Ph.D. degree in Mathematics in the University of Murcia, Spain, in 1992.
% Since 1995 he is Professor in the University of Murcia. Dr. Sim\'on has published papers in ring theory, group rings, partial actions of groups and coding theory.
% \end{IEEEbiographynophoto}
% 

% You can push biographies down or up by placing
% a \vfill before or after them. The appropriate
% use of \vfill depends on what kind of text is
% on the last page and whether or not the columns
% are being equalized.

%\vfill

% Can be used to pull up biographies so that the bottom of the last one
% is flush with the other column.
%\enlargethispage{-5in}

% that's all folks
\end{document}